        \newcommand\N{{\mathbb N}}   
        \newcommand\R{{\mathbb R}}   
        \newcommand\maps{{\colon}}   
        \newcommand{\define}[1]{{\bf \boldmath #1}}
\newtheorem{theorem}{Theorem}
\newtheorem{definition}[theorem]{Definition}
\newtheorem{lemma}[theorem]{Lemma}
\theoremstyle{remark}
\newtheorem*{example*}{Example}
\begin{document}   	
	
	\begin{center}   
	{\bf  Quantum Techniques for Reaction Networks \\}   
        \vspace{0.3cm}
	{\em John\ C.\ Baez \\}
        \vspace{0.3cm}
        {\small
        Department of Mathematics, University of California, \\ 
        Riverside CA 92521, USA \\
         and \\
        Centre for Quantum Technologies,
        National University of Singapore, \\
         Singapore 117543}
        \vspace{0.3cm}
        {\small email:  baez@math.ucr.edu\\} 
	\vspace{0.3cm}   
	{\small June 14, 2013}
	\vspace{0.3cm}   
	\end{center}   

\begin{abstract}
\noindent
Reaction networks are a general formalism for describing collections
of classical entities interacting in a random way.  While reaction
networks are mainly studied by chemists, they are equivalent to Petri
nets, which are used for similar purposes in computer science and
biology.  As noted by Doi and others, techniques from quantum field
theory can be adapted to apply to such systems.  Here we use these
techniques to study how the `master equation' describing stochastic
time evolution for a reaction network is related to the `rate
equation' describing the deterministic evolution of the expected
number of particles of each species in the large-number limit.  We
show that the relation is especially strong when a solution of master
equation is a `coherent state', meaning that the numbers of entities
of each kind are described by independent Poisson distributions.
\end{abstract}

\section{Introduction}

A `reaction network' describes how various kinds of classical
particles can interact and turn into other kinds.  They are commonly
used in chemistry.  For example, this reaction network gives a very
simplified picture of what is happening in a glass of water:
\[
\xymatrix{
\mathrm{H}_2\mathrm{O} \ar@<0.6ex>[r] 
& \mathrm{H}^+ + \mathrm{OH}^-   \ar@<0.6ex>[l] 
& \mathrm{H}_2\mathrm{O} + \mathrm{H}^+ \ar@<0.6ex>[r] 
& \mathrm{H}_3\mathrm{O}^+ \ar@<0.6ex>[l]
}
\]
The reactions here are all reversible, but this is not required by the
reaction network formalism.  The relevant particles here are not
atoms, but rather `species' including the water molecule
$\mathrm{H}_2\mathrm{O}$, the proton $\mathrm{H}^+$, the hydroxyl ion
$\mathrm{OH}^- $ and the hydronium ion $\mathrm{H}_3\mathrm{O}^+$.

While the dynamics of the reactions is fundamentally
quantum-mechanical, reaction networks give a simplified description of
what is going on.  First, given a `rate constant' for each reaction,
one can write down a differential equation called the \emph{rate
equation}, which describes the time evolution of the concentration of
each species in a deterministic way.  This is especially useful in the
limit where there are many particles of each species.  At a more
fine-grained level, one can describe reactions stochastically,
using the \emph{master equation}.  In the rate equation the
concentration of each species is treated as a continuous variable, but
in the master equation, the number of particles of each species is an
integer.

In this paper we explain how techniques from quantum field theory can
be used to study reaction networks.  This work is part of a broader
program emphasizing the links between quantum mechanics and a subject
we call `stochastic mechanics', where probabilities replace amplitudes
\cite{BaezBiamonte:2012,BaezFong:2012,BaezFong:2014}.

Using ideas from quantum field theory to model classical stochastic
systems goes back at least to 1976, with the work of Doi \cite{Doi:1976}.  It
has been further developed by Grassberger, Scheunert
\cite{GrassbergerScheunert:1980} and
many others.  The big surprise one meets at the very beginning of this
subject is that the canonical commutation relations between
creation and annihilation operators, long viewed as a hallmark of
quantum theory, are also perfectly suited to systems of identical
\emph{classical} particles interacting in a probabilistic way.

In quantum theory, the simplest example of a Fock space is used to describe 
the states of a quantum harmonic oscillator.    When such an oscillator is in its
$\ell$th energy level, its state is a vector that we can formally write as $z^\ell$ for 
some formal variable $z$---or better, $z^\ell/\sqrt{\ell!}$, where the normalization
factor comes in handy later.    A general quantum state is a linear combination of 
such vectors, so we can write it as a power series 
\[
          \Psi = \sum_{\ell = 0}^\infty \psi_\ell  \frac{z^\ell}{\sqrt{\ell!}}    
\]
where the coefficient $\psi_\ell \in \mathbb{C}$ is the amplitude for the
oscillator to be in its $\ell$th energy level.   We can increase the energy level using
the `raising operator' $a^\dagger$, which multiplies any power series by $z$:
\[        a^\dagger \Psi = z \Psi  .\]
We can decrease the energy level using the `lowering operator' $a$, which differentiates
any power series:
\[       a \Psi = \frac{d}{d z} \Psi  .\]
If we choose an inner product for which the vectors $z^\ell/\sqrt{\ell!}$ are orthonormal, these operators are indeed adjoint to each other:
\[          \langle a \Psi, \Phi \rangle = \langle \Psi, a^\dagger \Phi \rangle. \]
In checking this one sees why the normalization factor is required.

We can think of $z^\ell$ as a state containing $\ell$ quanta of energy.   In applications 
to quantum field theory, we think of these quanta as actual particles---for example, photons.   If we have $k$ different kinds of particles, or a single kind of particle with 
$k$ different states, we describe the quantum state of a collection of particles 
using a power series in $k$ variables.  For example, if $k = 2$ there is a Fock space consisting of power series in two variables $z_1$ and $z_2$:
\[       \Psi = \sum_{\ell_1 = 0}^\infty \sum_{\ell_2 = 0}^\infty \psi_{\ell_1,\ell_2} 
\frac{z_1^{\ell_1}}{\sqrt{\ell_1!}} \frac{z_2^{\ell_2}}{\sqrt{\ell_2!}}  \]
where $\psi_{\ell_1,\ell_2} \in \mathbb{C}$ is the amplitude for having $\ell_1$ particles 
of the first kind and $\ell_2$ particles of the second kind.   We have operators that create 
and annihilate particles of each kind:
\[      a^\dagger_i \Psi = z_i \Psi , \quad a_i \Psi = \frac{\partial}{\partial z_i} \Psi .\]
Each creation operator $a^\dagger_i$ is adjoint to the corresponding annihilation 
operator $a_i$, and these operators obey the `canonical commutation relations':
\[ 
\begin{array}{ccl}   
  \lbrack a_i, a_j \rbrack &=& 0 \\ 
 \lbrack {a_i}^\dagger, {a_j}^\dagger \rbrack &=& 0  \\
 \lbrack a_i, a^\dagger_j \rbrack &=& \delta_{ij} .
\end{array}
\] 
These rules are the basis of many practical calculations in particle physics, where 
products of creation and annihilations operators are used to model processes in
which particles interact and turn into other particles.

Doi noticed that all this has a stochastic analogue where we work with probabilities 
rather than amplitudes.  For example, there is a stochastic Fock space whose elements
are power series
\[       \Psi = \sum_{\ell_1 = 0}^\infty \sum_{\ell_2 = 0}^\infty \psi_{\ell_1,\ell_2} 
z^{\ell_1} z^{\ell_2} . \]
Here $\psi_{\ell_1,\ell_2} \in \mathbb{R}$ is the \emph{probability} of having $\ell_1$ particles of the first kind and $\ell_2$ particles of the second kind.  The formalism is
somewhat simpler than in the quantum case: for example, the normalization factors are
no longer required.   For $\Psi$ to describe a probability distribution, we require that all 
the coefficients be nonnegative and sum to 1.  We call $\Psi$ with these properties a `mixed state'.   A special case is a monomial $z_1^{\ell_1} z_2^{\ell_2}$.   This is called a `pure state': it describes a completely definite situation where there are $\ell_1$ particles of the first kind and $\ell_2$ of the second kind.

This method of describing a probability distribution using a power series is far from new.
It was introduced by DeMoivre \cite{DeMoivre:1730}, and perfected by Laplace 
\cite{Laplace:1782} in a paper published in 1782.   He dubbed these power series `generating functions', and they have been used in probability theory ever since.    Doi's new realization was that spaces of generating functions are analogous to the Fock spaces in quantum theory.  In particular, products of creation and annihilation 
operators can be used to describe stochastic processes in which objects interact and
turn into other objects.

A natural example is chemistry, where reactions are often described
stochastically using the master equation or deterministically using the rate
equation.  There is a long line of work on both these equations, initated by Horn and 
Jackson \cite{HornJackson:1972} along with Feinberg \cite{Feinberg:1979} in the 
late 1970s, and continuing to this day. This line of work does not use quantum
techniques, but it might profit from them, and also contribute to 
their development.  To illustrate how, here we use these techniques to study how 
the master equation reduces to the rate equation in the large-number limit. This 
is formally similar to the `classical limit' in which a quantum field theory reduces to a classical field theory.  

In Theorem \ref{thm.expected.value} we start with the master equation
and derive an equation for the rate of change of the expected number
of particles of each species.  This equation is similar to the rate
equation, and reduces to it in a certain approximation.  However, in
Theorem \ref{thm.coherent} we show that this equation gives the rate 
equation \emph{exactly} when the numbers of particles of each species
are described by independent Poisson distributions.  This relies on a nice fact: 
the generating function of a product of independent Poisson distributions is an 
eigenvector of all the annhilation operators. 

In quantum mechanics, eigenvectors of the annihilation operators are called `coherent 
states'.   Many facts about classical mechanics have simple generalizations 
to quantum mechanics if we restrict attention to coherent states \cite{KlauderSkagerstam:1980}.   
Here we are seeing a similar phenomenon in stochastic mechanics: the deterministic 
dynamics given by the rate equation matches the stochastic dynamics given by
the master equation in the special case of coherent states.  We explore this 
further in a companion paper \cite{BaezFong:2014}.

The application of quantum techniques to reaction networks is just one of
many interesting connections between quantum physics and network theory.
For example, the evolution of networks displays phase transitions analogous
to Bose--Einstein condensation \cite{BianconiBarabasi:2001,JavaroneArmano:2012}.
For more, see the review by Biamonte, Faccin and De Domenico  \cite{BiamonteFaccinDeDomenico:2017}.

\section{Reaction networks}
\label{reaction.networks}

A reaction network, first defined by Aris \cite{Aris:1965}, consists of a set
of different kinds of particles---each kind being called a
`species'---together with a set of processes, called `reactions', that
turn collections of particles of various species into collections of
particles of other species.  As the name suggests, it is convenient to
draw a reaction network as a graph.

While reaction networks first arose in chemistry, their uses are not
limited to this context.  Here is an example that arose in work on 
HIV, the human immunodeficiency virus \cite{Korobeinikov:2004}:
\begin{equation}
\xymatrix{
0 \ar@<0.6ex>[r]^-{\alpha} & H \ar@<0.6ex>[l]^-{\beta}  & 
H + V \ar[r]^<<<<<\gamma & I \ar[r]^>>>>>>\delta & I + V \\
 I  \ar[r]^\epsilon &0  &  V  \ar[r]^\zeta &0 
}
\label{HIV}
\end{equation}
Here we have three species:
\begin{itemize}
\item $H$: healthy white blood cells,
\item $I$: infected white blood cells, 
\item $V$: virions (that is, individual virus particles).
\end{itemize}
We also have six reactions:
\begin{itemize}
\item $\alpha$: the birth of one healthy cell, which has no input and
one $H$ as output.
\item $\beta$: the death of a healthy cell, which has one $H$ as input
and no output.
\item $\gamma$: the infection of a healthy cell, which has one $H$ and
one $V$ as input, and one $I$ as output.
\item $\delta$: `production', or the reproduction of the virus in an
infected cell, which has one $I$ as input and one $I$ and one $V$ as
output.
\item $\epsilon$: the death of an infected cell, which has one $I$ as
input and no output.
\item $\zeta$: the death of a virion, which has one $V$ as input and
no output.
\end{itemize}
So, we have a finite set of species  
\[   S = \{H, I, V \}  \]
but reactions go between `complexes', which are finite linear combinations 
of species with natural number coefficients, such as $H + V$ or $2H + 3 V$ 
or $0$.  (In the literature on reaction networks the complex $0$, corresponding to 
nothing at all, is often denoted $\emptyset$.)  We can think of complexes as 
elements of $\N^S$, the set of functions from $S$ to the natural numbers.  The 
reaction network is a graph with certain complexes as vertices and reactions as edges.

We can formalize this as follows.  There are various kinds of graphs;
the kind we want are sometimes called `directed multigraphs' or
`quivers', but to keep our terminology simple, we make the followng
definition:

\begin{definition} A \define{graph} consists of a set $E$ of \define{edges}, 
a set $V$ of \define{vertices}, and \define{source} and
\define{target} maps $s,t \maps E \to V$ saying where each edge starts
and ends.
\end{definition}

Then we may say: 

\begin{definition}
A \define{reaction network} $(S,K,T,s,t)$ consists of:
\begin{itemize}
\item a finite set $S$ of \define{species},
\item a finite set of \define{complexes} $K \subseteq \N^S$,
\item a graph $s,t \maps T \to K$ with complexes as vertices and some
finite set of \define{reactions} $T$ as edges.
\end{itemize}
\end{definition}

\noindent 
In the chemistry literature this sort of thing is called a `chemical
reaction network', but we want to emphasize that they are useful more
generally, so we drop the adjective.  Petri nets are an alternative
formalism that is entirely equivalent to reaction networks, often used
in subjects other than chemistry \cite{Koch:2010,KochReisigSchreiber:2011,Wilkinson:2006}.  
In the Petri net
literature the species are called `places' and the reactions are
called `transitions'.  For a more complete translation guide between
reaction networks and Petri nets, see \cite{BaezFong:2014}.

For convenience we shall often write $k=|S|$ for the number of species
present in a reaction network, and identify the set $S$ with the set
$\{1, \dots, k\}$.  This lets us write any complex as a $k$-tuple of
natural numbers.  In particular, we write the source and target of any
reaction $\tau$ as
\[    s(\tau) = (s_1(\tau), \dots, s_k(\tau)) \in \N^k , \]
\[    t(\tau) = (t_1(\tau), \dots,  t_k(\tau)) \in \N^k . \]

\section{The rate equation}
\label{rate.equation}

The amount of each species is represented by a `state' of the chemical
reaction network.  There are various options here. A \define{pure
state} is a vector $x \in \N^{k}$ with $i$th entry $x_i$ specifying
the number of instances of the $i$th species.  This generalises in two
ways.  First, we define a \define{classical state} to be any vector $x
\in [0,\infty)^k$ of nonnegative real numbers, and think of such a
state as specifying the expected number or concentration of the
species present.  Second, we define a \define{mixed state} to be a
probability distribution over the pure states.  This assigns a
probability $\psi_\ell$ to each $\ell \in \N^k$.  The rate equation
describes the time evolution of a classical state, while the master
equation describes the time evolution of a mixed state.

To write down either of these equations, we must first specify the
system's dynamics. In this paper we consider systems that evolve by
the law of mass action \cite{Aris:1965}. Very roughly, this law states that
the rate at which a reactions occur is proportional to the product of
the concentrations of all its input species.  We call the constant of
proportionality for each reaction its \define{rate constant}:

\begin{definition}
A \define{stochastic reaction network} $(S,K,T,s,t,r)$ consists of a
reaction network together with a map $r \maps T \to (0,\infty)$
assigning a \define{rate constant} to each reaction.
\end{definition}

\noindent 
This concept is equivalent to another concept in the literature, namely that
of a stochastic Petri net \cite{BaezFong:2014}.

Given a stochastic reaction network, the rate equation says that for each 
reaction $\tau$ the time derivative of a classical state $x$ is the product
of:
\begin{itemize}
\item the vector $t(\tau) - s(\tau)$ whose $i$th component is the change
in the number of the $i$th species due to the reaction $\tau$;
\item the concentration of each input species $i$ of $\tau$ raised to
the power given by the number of times it appears as an input, namely
$s_i(\tau)$;
\item the rate constant $r(\tau)$ of $\tau$.
\end{itemize}

\begin{definition} 
The \define{rate equation} for a stochastic reaction network 
$(S,K,T,s,t,r)$ is
\begin{equation}
\frac{d}{d t}x(t) = 
\sum_{\tau \in T} r(\tau) (t(\tau) - s(\tau)) x(t)^{s(\tau)}
\label{eq:rate_equation}
\end{equation}
where $x \maps \R \to [0,\infty)^k$ and we have
used multi-index notation to define
\[  x^{s(\tau)} = x_1^{s_1(\tau)} \cdots x_k^{s_k(\tau)}. \]
\end{definition}

For the HIV reaction network in Equation \ref{HIV}, if we also use the
Greek letter names of the reactions as names for their rate
constants, we get this rate equation:
\[ \begin{array}{ccl}
\displaystyle{ \frac{dH}{dt}} &=& \alpha - \beta H - \gamma H V
\\  \\
\displaystyle{ \frac{dI}{dt}} &=&   \gamma H V - \epsilon I
\\    \\
\displaystyle{ \frac{dV}{dt}} &=& - \gamma H V + \delta I  - \zeta V
\end{array}
\]
This example involves only complexes where the coefficient of each
species is at most 1.  It is worthwhile comparing an example with larger coefficients,
for example the dissociation of a diatomic gas
into atoms and the recombination of these atoms into a molecule:
\[
\xymatrix{
\mathrm{Cl}_2 \ar@<.5ex>[r]^-{\alpha} & 2\mathrm{Cl}  \ar@<.5ex>[l]^-{\beta}
}
\]
This reaction network gives the following rate equation:
\[ \begin{array}{ccl}
\displaystyle{ \frac{d\mathrm{Cl}_2}{dt}} &=& -\alpha \mathrm{Cl}_2 + \beta \mathrm{Cl}^2
\\  \\
\displaystyle{ \frac{d \mathrm{Cl}}{dt}} &=& 2 \alpha \mathrm{Cl}_2 - 2 \beta \mathrm{Cl}^2
\end{array}
\]
The recombination of two atoms of chlorine into a diatomic molecule occurs 
at a rate proportional to the square of the concentration of atomic chlorine, 
so both terms arising from reaction $\beta$ are proportional to $\mathrm{Cl}^2$.  
The reaction $\alpha$ produces two atoms of $\mathrm{Cl}$ while the reaction 
$\beta$ uses up two atoms of $\mathrm{Cl}$, so both terms in the second 
equation have a coefficient of 2.  All this follows from Equation \ref{eq:rate_equation}.

\section{The master equation}

The master equation describes the time evolution of mixed states.
For the rest of this paper we fix a stochastic reaction
network $(S,K,T,s,t,r)$.  Recall that for each $\ell \in \N^k$, a
`mixed state' gives the probability $\psi_\ell$ that for each $i$,
exactly $\ell_i \in \N$ instances of the $i$th species are present.
The \define{master equation} says that
\[     \displaystyle{ \frac{d}{dt} } \psi_{\ell'} = 
\sum_{\ell \in \N^k} H_{\ell' \ell} \, \psi_\ell  \]
for some matrix of numbers $H$ determined by the stochastic reaction
network.  The matrix element $H_{\ell' \ell}$
is the probability per time for the pure state $\ell$ to evolve to the 
pure state $\ell'$.  This probability is a sum over reactions:
\begin{equation}
     H_{\ell' \ell} = \sum_{\tau \in T} H(\tau)_{\ell' \ell} .  
\label{def.ham.1} 
\end{equation}

To describe the matrix $H(\tau)$, we need `falling powers'.
For any natural numbers $n$ and $p$, we define the $p$th \define{falling 
power} of $n$ by 
\[     n^{\underline{p}} = n(n - 1) \cdots (n - p +1)  .\]
This is the number of ways of choosing an ordered $p$-tuple of
distinct elements from a set with $n$ elements.  Note that
$n^{\underline{p}} = 0$ if $p > n$.  More generally, for any
multi-indices $\ell$ and $m$, we define
\[  \ell^{\underline{m}} = 
\ell_1^{\;\underline{m_1}} \; \cdots \; \ell_k^{\;\underline{m_k}} \]
This is the number of ways to choose, for each species $i = 1, \dots,
k$, an ordered list of $m_i$ distinct things from a collection of
$\ell_i$ things of that species.  Thus, we expect a factor of this
sort to appear in the master equation.

Using this notation, we define the matrix $H(\tau)$ as follows:
\begin{equation}
   H(\tau)_{\ell' \ell} = r(\tau) \;  \ell^{\underline{s(\tau)}}  \; 
\left(\delta_{\ell', \ell + t(\tau) - s(\tau)} - \delta_{\ell' \ell} \right) . 
\label{def.ham.2}
\end{equation}
Here $\delta$ is the Kronecker delta, which equals 1 if its subscripts
are equal and 0 otherwise.  The first term describes the rate for the 
complex $\ell$ to become the
complex $\ell'$ via the reaction $\tau$: it equals the rate constant
for this reaction times the number of ways this reaction can occur.
The second term describes the rate at which $\ell$ goes away due to
this reaction.  Thus, the size of the second term is equal to that of
the first, but its sign is opposite.

Putting together Equations \ref{def.ham.1} and \ref{def.ham.2}, we obtain
this formula for the Hamiltonian:
\begin{equation}
H_{\ell' \ell} = \sum_{\tau \in T} r(\tau) \;  
\ell^{\underline{s(\tau)}}  \; \left(
\delta_{\ell', \ell + t(\tau) - s(\tau)} - \delta_{\ell' \ell} \right) .
\label{master1} 
\end{equation}

\section{The stochastic Fock space}

We can understand the Hamiltonian for the master equation at a deeper level 
using techniques borrowed from quantum field theory.  The first step is to 
introduce a stochastic version of Fock space.  To do this, we write any mixed 
state as a formal power series in some variables $z_1, \dots, z_k$, with the 
coefficient
of 
\[    z^\ell = z_1^{\ell_1} \cdots z_k^{\ell_k} \]
being the probability $\psi_\ell$.  That is, we write any mixed state as
\[   \Psi = \sum_{\ell \in \N^k} \psi_\ell z^\ell.\]
Because a mixed state is a probability distribution, the coefficients 
$\psi_\ell$ must be nonnegative and sum to 1. Indeed, in this formalism 
\define{mixed states} are precisely the formal power series $\Psi$ with 
\[
   \sum_{\ell \in \N^k} \psi_\ell = 1  , \qquad \psi_\ell \ge 0.
\]
The simplest mixed states are the monomials $z^\ell$; these are the
\define{pure states}, where there is a definite number of things of
each species.

In quantum mechanics we use a similar sort of Fock space, but with
power series having complex rather than real coefficients.  To obtain a 
Hilbert space, we often restrict attention to power series obeying for
which a certain norm is finite.   However, power series not obeying this 
condition are also useful, for example to ensure that operators of interest 
are everywhere defined, instead of merely densely defined.   

Thus, for the purposes of studying the master equation, we define
the \define{stochastic Fock space} to be the vector space 
$\R[[z_1, \dots, z_k]]$ of all real formal power series in the 
variables $z_1, \dots, z_k$.  We treat the Hamiltonian $H$ for 
the master equation as the operator on stochastic Fock space with
\[ H \Psi = \sum_{\ell, \ell' \in \N^k} H_{\ell' \ell} \psi_\ell 
\, z^{\ell'} \]   
for all 
\[   \Psi = \sum_{\ell \in \N^k} \psi_\ell z^\ell, \]
where the matrix entries $H_{\ell' \ell}$ are given by Equation
\ref{master1}.

This allows us to express the Hamiltonian in terms of
certain special operators on the stochastic Fock space: the
creation and annihilation operators.  Our notation here follows 
that used in quantum field theory, where the creation and annihilation 
operators are adjoints of each other.  Let $1 \le i \le k$. The 
\define{creation operator} $a_i^\dagger$ is given by
\[
a_i^\dagger \Psi = z_i \Psi.
\]
This takes any pure state to the pure state with one additional
instance of the $i$th species:
\[    a_i^\dagger \left(z_1^{\ell_1} \cdots z_i^{\ell_i}\cdots
z_k^{\ell_k}\right) = z_1^{\ell_1} \cdots z_i^{\ell_i+1} \cdots z_k^{\ell_k} .\]
The corresponding \define{annihilation operator} is given by formal
differentiation:
\[
a_i \Psi = \frac{\partial}{\partial z_i} \Psi.
\]
This takes any pure state to the pure state with one
fewer instance of the $i$th species, but multiplied by a coefficient
$n_i$:
\[    a_i \left(z_1^{\ell_1} \cdots z_i^{\ell_i}\cdots
z_k^{\ell_k}\right) = \ell_i \; z_1^{\ell_1} \cdots z_i^{\ell_i-1} \cdots z_k^{\ell_k} .\] 
This represents the fact that there are $n_i$ ways to annihilate one of 
the instances of the $i$th species present.  

In what follows we use multi-index notation to define, for any $n \in
\N^k$,
\[   a^n = a_1^{n_1} \cdots a_k^{n_k} \]
and
\[   {a^\dagger}^n = {a_1^\dagger}^{n_1} \cdots {a_k^\dagger}^{n_k} .\]
These expressions are well-defined because all the annihilation operators
commute with each other, and similarly for the creation operators.

\begin{theorem} 
For any stochastic reaction network, the Hamiltonian is given by
\[
H = \sum_{\tau \in T} r(\tau) \, 
\left({a^\dagger}^{t(\tau)} - {a^\dagger}^{s(\tau)}\right) \, a^{s(\tau)} \;.
\]
\end{theorem}

\begin{proof} We start with a basic result on annihilation and creation
operators:
 
\begin{lemma} \label{lem.annihilate.create}
For any multi-indices $\ell, m$ we have
\[     {a^\dagger}^m z^\ell = z^{\ell + m} , \qquad
        a^m z^\ell = \ell^{\underline{m}} \, z^{\ell - m}. \]
\end{lemma} 

\begin{proof}  The first equation follows inductively from the
definition of the creation operators, which implies
\[    {a_i}^\dagger z^\ell 
= z_1^{\ell_1} \cdots z_i^{\ell_i + 1} \cdots z_k^{\ell_k}. \]
The second follows inductively from the definition of annihilation operators,
which implies
\[    a_i \, z^\ell = 
\ell_i \, z_1^{\ell_1} \cdots z_i^{\ell_i - 1} \cdots z_k^{\ell_k}. \]
Note that $\ell^{\underline{m}} = 0$ if $m_i > \ell_i$ for any $i$,
and in this case we also have $a^m z^\ell = 0$, so the equation $a^m
z^\ell = \ell^{\underline{m}} \, z^{\ell - m}$ holds in this case
because both sides vanish.
\end{proof}

We now prove the theorem.  For any multi-index $\ell$, Equation \ref{master1} gives
\[   \begin{array}{ccl}
H z^\ell &=&  \displaystyle{\sum_{\ell' \in \N^k} H_{\ell' \ell}  
\, z^{\ell'} } 
\\  \\
&=& \displaystyle{\sum_{\tau \in T}} \, r(\tau)  \, \ell^{{\underline{s(\tau)}}}  
\left( z^{\ell + t(\tau) - s(\tau)} - z^\ell \right) 
\end{array} \]
Using Lemma \ref{lem.annihilate.create} we obtain
\[   
H z^\ell =
\displaystyle{\sum_{\tau \in T}} \, r(\tau) \, \left({a^\dagger}^{t(\tau)} 
- {a^\dagger}^{s(\tau)} \right) \, a^{s(\tau)} \, z^\ell 
\]
and thus
\[   H = \displaystyle{\sum_{\tau \in T}} \, r(\tau) \, 
\left({a^\dagger}^{t(\tau)} - {a^\dagger}^{s(\tau)} \right) \, a^{s(\tau)}  .
\]
The last step requires a little justification.  Technically speaking, the 
monomials $z^\ell$ are not a basis of the stochastic Fock space
$\R[[z_1, \dots, z_k]]$, because not every formal power series is a 
\emph{finite} linear combination of monomials.  However, these
monomials form a `topological basis' of $\R[[z_1, \dots, z_k]]$, 
in the sense that every element of
this vector space can be expressed as an \emph{convergent infinite} 
linear combination of these monomials, with respect to a certain topology.
(In this topology, a sequence of formal power series converges if each
coefficient converges.)  The annihilation and creation operators, and
indeed all the operators discussed in this paper, are continuous in
this topology.  So, to check equations between these operators, it
suffices to check them on the monomials $z^\ell$.  \end{proof}

Next we investigate what the master equation says about the time evolution
of expected values.   For this, we start by defining
\[   \langle \Psi \rangle = \sum_{\ell \in \N^k} \psi_\ell \]
for any formal power series
\[    \Psi = \sum_{\ell \in \N^k} \psi_\ell z^\ell . \] 
In general the sum defining $\langle \Psi \rangle$ may not converge,
but it converges and equals 1 when $\Psi$ is a mixed state.  Suppose
$O$ is an operator on the space of formal power series in the
variables $z_1, \dots, z_k$.  We define the \define{expected value} of
$O$ in the mixed state $\Psi$ to be $\langle O \Psi \rangle$, assuming
this converges.  This generalizes the usual definition of the expected
value of a random variable, in a way that emphasizes the analogy
between stochastic mechanics and quantum mechanics.  While in quantum
mechanics we compute expected values of observables using expressions
of the form $\langle \Psi, O \Psi \rangle$, where the brackets denote
an inner product, in stochastic mechanics we use $\langle O \Psi
\rangle$.  For more details see \cite{{BaezBiamonte:2012},{BaezFong:2012}}.

We are especially interested in the expected value of the 
\define{number operators}
\[        N_i = a^\dagger_i a_i  . \]
Note that 
\[    \langle N_i \Psi \rangle = \sum_{\ell \in \N^k} \ell_i \psi_\ell \]
Since $\psi_\ell$ is the probability of being in a pure state where there
are $\ell_i$ things of the $i$th species, the above formula indeed gives the
expected value of the number of things of this species.

Our goal is to compute
\[    \displaystyle{ \frac{d}{dt} \langle N_i \Psi(t) \rangle } \]
assuming that $\Psi(t)$ obeys the master equation.  
For this, we need to define the falling powers of an operator $A$ as follows:
\[  A^{\underline{p}} = A(A-1) \cdots (A-p+1)  \]
for any $p \in \N$.   If $m$ is a multi-index we define
\[  N^{\underline{m}} 
= N_1^{\, \underline{m_1}} \cdots N_k^{\,\underline{m_k}}. \]
This is well defined because the number operators $N_i$ commute.

\begin{lemma} \label{lem.number.falling}
For any multi-indices $\ell$ and $m$ we have
\[  N^{\underline{m}} \, z^\ell = \ell^{\underline{m}} \, z^\ell .
\qedhere \]
\end{lemma}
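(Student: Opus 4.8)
The plan is to reduce everything to the observation that each monomial $z^\ell$ is a simultaneous eigenvector of all the number operators $N_i$. First I would compute the action of a single number operator: by definition $N_i = a_i^\dagger a_i$, and applying Lemma \ref{lem.annihilate.create} (with $m = e_i$, the $i$th standard basis multi-index) gives $a_i z^\ell = \ell_i \, z^{\ell - e_i}$, and then $a_i^\dagger$ restores the lost factor of $z_i$, so $N_i z^\ell = \ell_i \, z^\ell$. Thus $z^\ell$ is an eigenvector of $N_i$ with eigenvalue $\ell_i$.

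Next I would push this through the falling-power construction in the one-species case. Since $N_i z^\ell = \ell_i z^\ell$, any polynomial $p(N_i)$ applied to $z^\ell$ yields $p(\ell_i) z^\ell$; applying this to the polynomial $x(x-1)\cdots(x-p+1)$ gives
\[
   N_i^{\underline{p}} \, z^\ell
   = \ell_i (\ell_i - 1) \cdots (\ell_i - p + 1) \, z^\ell
   = \ell_i^{\,\underline{p}} \, z^\ell .
\]
I would note here that the operators $N_i - 1, \dots, N_i - p + 1$ appearing in $N_i^{\underline{p}}$ are interpreted with $1$ denoting the identity operator, so each of them also has $z^\ell$ as an eigenvector, with eigenvalue $\ell_i - 1, \dots, \ell_i - p + 1$ respectively.

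Finally I would assemble the multi-index statement. Because the number operators $N_1, \dots, N_k$ commute (as already noted in the text when $N^{\underline{m}}$ was defined), the operator $N^{\underline{m}} = N_1^{\,\underline{m_1}} \cdots N_k^{\,\underline{m_k}}$ can be applied factor by factor, and each factor acts on $z^\ell$ by the scalar computed above; hence
\[
   N^{\underline{m}} \, z^\ell
   = \Bigl( \prod_{i=1}^{k} \ell_i^{\,\underline{m_i}} \Bigr) z^\ell
   = \ell^{\underline{m}} \, z^\ell ,
\]
using the definition $\ell^{\underline{m}} = \ell_1^{\,\underline{m_1}} \cdots \ell_k^{\,\underline{m_k}}$. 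I do not anticipate any real obstacle: the only point requiring any care is bookkeeping with the identity operator inside the falling powers and invoking commutativity of the $N_i$ to treat the $k$ factors independently, both of which are routine.
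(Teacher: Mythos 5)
Your proposal is correct and follows essentially the same route as the paper: both establish $N_i z^\ell = \ell_i z^\ell$ from Lemma \ref{lem.annihilate.create} and then deduce the falling-power statement, with your write-up merely spelling out the eigenvector bookkeeping that the paper leaves as ``the lemma follows directly from this.''
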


\begin{proof}  By Lemma \ref{lem.annihilate.create} and the definition
 of the number operators we have 
\[   N_i z^\ell = \ell_i z^\ell  \]
for all $1 \le i \le k$.  The lemma follows directly from this.
\end{proof}

There is a way to extract a classical state from a mixed state, which lets
us connect the rate equation to the master equation.  To do this, we
 write $N$ for the vector, or list, of number
operators $(N_1, \dots, N_k)$.   Then, given a mixed state $\Psi$, we
set
\[    \displaystyle{ \langle N \Psi \rangle  =
 (  \langle N_1 \Psi  \rangle  , \; \dots \;, 
\langle N_k \Psi \rangle  ) } .\]
If the expected values here are well-defined, $\langle N \Psi \rangle 
\in [0,\infty)^k$ is a classical state.  It is a simplified description
of the mixed state $\Psi$, which only records the expected number of 
instances of each species.

The next result says how this sort of classical state evolves in time, 
assuming that the mixed state it comes from obeys the master equation:

\begin{theorem} \label{thm.expected.value}
For any stochastic reaction network and any mixed state
$\Psi(t)$ evolving in time according to the master equation, we have
\[   \displaystyle{ \frac{d}{dt} \langle N \Psi(t) \rangle } = 
 \displaystyle{\sum_{\tau \in T}} \, r(\tau) \,  (s(\tau) - t(\tau)) \; 
\left\langle N^{\underline{s(\tau)}}\, \Psi(t) \right\rangle  \]
assuming the expected values and their derivatives exist.
\end{theorem}

\begin{proof}  First note using Lemmas \ref{lem.annihilate.create} and 
\ref{lem.number.falling} that for any multi-indices $\ell, m, n$ and any 
natural number $i$ we have
\[ \begin{array}{ccl}
\langle N_i \, {a^\dagger}^{m} a^n z^\ell \rangle 
&=& \ell^{\underline{n}} \, \langle N_i \,{a^\dagger}^{m} z^{\ell - n} \rangle
\\  \\
&=& \ell^{\underline{n}} \, \langle N_i \,  z^{\ell - n + m} \rangle \\  \\
&=& \ell^{\underline{n}} \,(\ell_i - n_i + m_i)  .
\end{array}
\]
Thus if at any given time
\[  \Psi(t) = \sum_{\ell \in \N^k} \psi_\ell(t)  z^\ell  \]
and $\Psi(t)$ evolves in time according to the master equation, we have
\[  \begin{array}{ccl}
\displaystyle{ \frac{d}{dt} \langle N_i \Psi(t) \rangle } &=& 
\langle N_i H \Psi(t) \rangle 
\\   \\
&=& \displaystyle{\sum_{\tau \in T}} \, \displaystyle{\sum_{\ell \in \N^k}}\, 
r(\tau) \, \left\langle N_i \left({a^\dagger}^{t(\tau)} - 
{a^\dagger}^{s(\tau)}\right) \, a^{s(\tau)} \psi_\ell(t) z^\ell \right\rangle 
\\  \\
&=& \displaystyle{\sum_{\tau \in T}} \, \displaystyle{\sum_{\ell \in \N^k}}\, 
r(\tau) \, \ell^{\underline{s(\tau)}}\, (s_i(\tau) - t_i(\tau)) \, 
\psi_\ell(t)  
\\  \\
&=& \displaystyle{\sum_{\tau \in T}} \, 
\displaystyle{\sum_{\ell \in \N^k}}\, r(\tau) \,  (s_i(\tau) - t_i(\tau)) \; 
\left\langle N^{\underline{s(\tau)}}\, z^\ell \right\rangle \, \psi_\ell(t)   
\\   \\
&=& \displaystyle{\sum_{\tau \in T}} \, r(\tau) \,  (s_i(\tau) - t_i(\tau)) \;
 \left\langle N^{\underline{s(\tau)}}\, \Psi(t) \right\rangle   
\end{array} \]
Recall that $N$ is our notation for the vector of number operators
$(N_1, \dots, N_k)$.  Thus, the above equation is equivalent to
\[   \displaystyle{ \frac{d}{dt} \langle N \Psi(t) \rangle } = 
 \displaystyle{\sum_{\tau \in T}} \, r(\tau) \,  (s(\tau) - t(\tau)) \; 
 \left\langle N^{\underline{s(\tau)}}\, \Psi(t) \right\rangle   \qedhere
  \]
\end{proof}

The above theorem makes clear that the rate equation is closely
related to the master equation, since the former says
\[
\frac{d}{d t}x(t) = \sum_{\tau \in T} r(\tau) (t(\tau) - 
s(\tau)) x(t)^{s(\tau)}
\]
while the latter implies
\[   \displaystyle{ \frac{d}{dt} \langle N \Psi(t) \rangle } = 
 \displaystyle{\sum_{\tau \in T}} \, r(\tau) \,  (s(\tau) - t(\tau)) \; 
 \left\langle N^{\underline{s(\tau)}}\, \Psi(t) \right\rangle   .\]
Suppose we let $x(t)$ be the classical state coming from the mixed state
$\Psi(t)$:
\[     x(t) = \langle N \Psi(t) \rangle  \]
The rate equation for $x(t)$ would then follow from the master equation 
for $\Psi(t)$ if we had
\[   \langle N \Psi(t) \rangle^m = 
\langle N^{\underline{m}} \Psi(t) \rangle \] 
for every multi-index $m$.  This equation is not true in general.
However, it should hold \emph{approximately} in a suitable limit of
large numbers.  And surprisingly, it holds \emph{exactly} for coherent
states.

\section{Coherent states}

For any $c \in [0,\infty)^k$, we define the \define{coherent state} with
expected value $c$ to be the mixed state
\[
\Psi_c = \displaystyle{\frac{e^{c \cdot z}}{e^c}} 
\]
where $c \cdot z$ is the dot product of the vectors $c$ and $z$, and
we set $e^c = e^{c_1+\dots+c_n}$.  Equivalently,
\[  \Psi_c = \frac{1}{e^c} \sum_{n \in \N^k} \frac{c^n}{n!}z^n, \] 
where $c^n$ and $z^n$ are defined as products in our usual way, and
$n! = n_1! \, \cdots \, n_k!$. The name `coherent state' comes from
quantum mechanics \cite{KlauderSkagerstam:1980}, where we think of the coherent state
$\Psi_c$ as the quantum state that best approximates the classical
state $c$. In the state $\Psi_c$, the probability of having $n_i$
things of the $i$th species is equal to
\[      e^{-c_i} \, \displaystyle{\frac{ c_i^{n_i}}{n_i!} }. \] 
This is precisely the definition of a Poisson distribution with mean
equal to $c_i$. The state $\Psi_c$ is thus a product of independent
Poisson distributions.

\begin{theorem} \label{thm.coherent}
Given any stochastic reaction network, let $\Psi(t)$ be a mixed state
evolving in time according to the master equation.  If $\Psi(t)$ is a
coherent state when $t = t_0$, then
\[    x(t) = \langle N \Psi(t) \rangle \]
obeys the rate equation when $t = t_0$.
\end{theorem}

\begin{proof}  We prove this using a series of lemmas:

\begin{lemma} \label{lem.annihilation.coherent}
For any multi-index $m$ and any coherent state $\Psi_c$ we have
\[         a^m \Psi_c = c^m \Psi_c  \]
and
\[         {a^\dagger}^m \Psi_c = z^m \Psi_c  .\]
\end{lemma}

\begin{proof}  
The second equation is immediate from the definition, while the first
follows from
\[    a_i \Psi_c = 
\frac{\partial}{\partial z_i} \displaystyle{\frac{e^{c \cdot z}}{e^c}} = 
c_i \Psi_c . \qedhere \]
\end{proof}

\begin{lemma} \label{lem.number.falling.2}
For any multi-index $m$ we have
\[       N^{\underline{m}} = {a^{\dagger}}^m a^m . \]
\end{lemma}

\begin{proof} 
For any multi-index $\ell$, Lemmas \ref{lem.annihilate.create} and
\ref{lem.number.falling} imply
\[  
{a^\dagger}^m a^m z^\ell 
\;\; = \;\; \ell^{\underline{m}} \, {a^\dagger}^{m} z^{\ell - m}  
\;\; = \;\; \ell^{\underline{m}} \,  z^\ell  
\;\; = \;\; N^{\underline{m}} \,  z^\ell  .
\]
Since the states $z^\ell$ form a topological basis of $\R[[z_1, \dots,
z_k]]$, and all the operators in question are continuous, the lemma
follows.
\end{proof}

\begin{lemma} \label{lem.creation.stochastic}
For any multi-index $m$ and any $\Psi \in \R[[z_1, \dots, z_k]]$ we have
\[    \langle {a^{\dagger}}^m \Psi \rangle = \langle \Psi \rangle \]
\end{lemma}

\begin{proof}
We have
\[    \langle {a^{\dagger}}^m \Psi \rangle \;\; = \;\; 
\langle \sum_{\ell \in \N^k} \psi_\ell z^{\ell + m} \rangle \;\; = \;\; 
 \sum_{\ell \in \N^k} \psi_\ell  = \langle \Psi \rangle .\qedhere \]
\end{proof}

\begin{lemma} \label{lem.coherent}
For any multi-index $m$ and any coherent state $\Psi_c$ we have
\[    \langle N^{\underline{m}} \Psi_c \rangle 
= \langle N \Psi_c \rangle^m  \]
\end{lemma}

\begin{proof} 
By Lemma \ref{lem.number.falling.2} and Lemma
\ref{lem.annihilation.coherent} we have
 \[ N^{\underline{m}} \Psi_c =  {a^\dagger}^m a^m \Psi_c \;\; = \;\; 
c^m {a^\dagger}^m \Psi_c \] so by Lemma \ref{lem.creation.stochastic}
\[    \langle N^{\underline{m}} \Psi_c \rangle \; \; = \; \; 
\langle c^m {a^\dagger}^m \Psi_c \rangle  \; \; 
= \; \;  c^m \langle \Psi_c \rangle  \; \; = \; \;  c^m.  \]
As a special case we have $\langle N_i \Psi_c \rangle = c_i$, so we also have
\[   \langle N \Psi_c \rangle^m = c^m  \]
and it follows that $\langle N^{\underline{m}} \Psi_c \rangle = \langle N
\Psi \rangle^m$.
\end{proof}

To prove Theorem \ref{thm.coherent}, we start by setting
\[      x(t) = \langle N \Psi(t) \rangle \]
where $\Psi(t)$ is a solution of the master equation.
Theorem \ref{thm.expected.value} implies that the time derivative of
$x$ equals
\[   \dot{x}(t) = 
 \displaystyle{\sum_{\tau \in T}} \, r(\tau) \, (s(\tau) - t(\tau)) \;
  \left\langle N^{\underline{s(\tau)}}\, \Psi(t) \right\rangle \;. \] 
If  $\Psi(t)$ is a coherent state at time $t = t_0$, Lemma
\ref{lem.coherent} implies that
\[  \left\langle N^{\underline{s(\tau)}}\, \Psi(t_0) \right\rangle =
\left\langle N \Psi(t_0) \right\rangle^{s(\tau)}  = x(t_0)^{s(\tau)}   \]
so we have
\[     \dot{x}(t_0) = 
\displaystyle{\sum_{\tau \in T}} \, r(\tau) \, (s(\tau) - t(\tau)) \; 
x(t_0)^{s(\tau)}   \]
which is the rate equation at time $t_0$, as desired.
\end{proof}

In general, the above result applies at only one moment in
time.  The reason is that if a solution of the master equation
is a coherent state at some time, it need not be a coherent
state at later (or earlier) times.  Still, Theorem
\ref{thm.expected.value} implies that as long as the expected values
$\langle N^{\underline{m}} \Psi(t) \rangle$ are \emph{close} to the powers 
$\langle N \Psi(t) \rangle^m$, at least when $m$ is the source of some
reaction, the expected values $x(t) = \langle N \Psi(t) \rangle$ will
\emph{approximately} obey the rate equation. This could be studied in
detail using either the techniques introduced here or those discussed
by Anderson and Kurtz \cite{AndersonKurtz:2011}.

Furthermore, in some cases when $\Psi(t)$ is initially a
coherent state it continues to be so for all times.  In this case,
$x(t)$ continues to exactly obey the rate equation.  For example, this
is true when all the complexes in the reaction network consist of a
single species.  It is also true for a large class of equilibrium
solutions of the master equation, as shown by Anderson, Craciun and
Kurtz \cite{AndersonCraciunKurtz:2010}.  For more on how quantum 
techniques apply to this situation, see the companion to this paper \cite{BaezFong:2014}.

\subsection*{Acknowledgements}
I thank Jacob Biamonte, Brendan Fong, and the students at U.\ C.\
Riverside who helped come up with a preliminary version of the proof of 
Theorem \ref{thm.expected.value}, notably Daniel Estrada, Reeve Garrett, 
Michael Knap, Tu Pham, Blake Pollard and Franciscus Rebro.  I thank
John Rowlands and Blake Stacey for catching typos and other mistakes, and
the anonymous referees for suggesting improvements.  
I also thank the Centre of Quantum
Technology and the Mathematics Department of U.\ C.\ Riverside, where
this work was done.

\end{document}